\documentclass{amsart}

\usepackage{amssymb}
\usepackage{amsthm}
\usepackage{wasysym}
\usepackage[numbers]{natbib}



\newcommand{\abs}[1]{\left|#1\right|}
\newcommand{\norm}[1]{\left\| #1 \right\|}
\newcommand{\normLzwei}[1]{\left\| #1 \right\|_{L^2}}
\newcommand{\OpNormLzwei}[1]{\left\| #1 \right\|_{L^2\rightarrow L^2}}



\newcommand{\Ho}{\mathcal{H}}

\newtheorem{Theorem}{Theorem}
\newtheorem{Proposition}[Theorem]{Proposition}
\newtheorem{Lemma}[Theorem]{Lemma}
\newtheorem{Remark}[Theorem]{Remark}

\newcommand{\R}{\mathbb{R}}
\newcommand{\C}{\mathbb{C}}
\newcommand{\N}{\mathbb{N}}

\begin{document}

\title[Optimal decay rate for the damped wave equation]{Optimal decay rate for the wave equation on a square with constant damping on a strip}
\author[Reinhard Stahn]{Reinhard Stahn}

\begin{abstract} 
We consider the damped wave equation with Dirichlet boundary conditions on the unit square parametrized by Cartesian coordinates $x$ and $y$. We assume the damping $a$ to be strictly positive and constant for $x<\sigma$ and zero for $x>\sigma$. We prove the exact $t^{-4/3}$-decay rate for the energy of classical solutions. Our main result (Theorem \ref{thm: goal}) answers question (1) of \citep[Section 2C.]{AnantharamanLeautaud2014}.
\end{abstract}

\maketitle

{\let\thefootnote\relax\footnotetext{MSC2010: Primary 35B40, 47D06. Secondary 35L05, 35P20.}}
{\let\thefootnote\relax\footnotetext{Keywords and phrases: damped wave equation, piecewise constant damping, energy, resolvent estimates, polynomial decay, $C_0$-semigroups.}}


\section{Introduction}\label{sec: Introduction}

\subsection{The main result}\label{sec: The main result}
Let $\Box = (0,1)^2$ be the unit square. We parametrize it by Cartesian coordinates $x$ and $y$. Let $a$ - the damping - be a function on $\Box$ which depends only on $x$ such that $a(x)=a_0>0$ for $x<\sigma$ and $a(x)=0$ for $x>\sigma$ where $\sigma$ is some fixed number from the interval $(0,1)$. We consider the damped wave equation:
  \begin{equation}\nonumber
    \begin{cases}
      u_{tt}(t,x,y)-\Delta u(t,x,y) + 2a(x) u_t(t,x,y) = 0 & (t\in(0,\infty),\, (x,y)\in \Box), \\
      u(t,x,y) = 0 & (t\in(0,\infty),\, (x,y)\in\partial\Box), \\
      u(0,x,y) = u_0(x,y),\, u_t(0,x,y) = u_1(x,y) & ((x,y)\in \Box) .
    \end{cases}
  \end{equation}
We are interested in the energy
  \begin{equation}\nonumber
    E(t,U_0) = \frac{1}{2} \int\int \abs{\nabla u(t,x,y)}^2 + \abs{u_t(t,x,y)}^2 \, dx dy   
  \end{equation}
of a wave at time $t$ with initial data $U_0=(u_0, u_1)$. Let $D = (H^2\cap H^1_0) \times H^1_0(\Box)$ denote the set of classical initial data. The purpose of this paper is to prove
  \begin{Theorem}\label{thm: goal}
    Let $\Box$, $a$ and $E(t,U_0)$ be as above. Then $\sup E(t,U_0)^{1/2} \approx t^{-2/3}$ where the supremum is taken over initial data $\norm{U_0}_{D}=1$.
  \end{Theorem}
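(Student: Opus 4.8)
The plan is to convert the energy statement into a two-sided bound on the resolvent along the imaginary axis, and to prove that bound by an explicit mode-by-mode analysis. Writing the equation as a first-order system $\partial_t U = AU$ with $U=(u,u_t)$ on the energy space $\Ho = H^1_0(\Box)\times L^2(\Box)$, the generator $A$ is dissipative with compact resolvent, $e^{tA}$ is a contraction semigroup, $E(t,U_0)^{1/2}=\norm{e^{tA}U_0}_{\Ho}$ and $\norm{U_0}_D\approx\norm{AU_0}_{\Ho}$. By the Borichev--Tomilov theorem, the assertion $\sup_{\norm{U_0}_D=1}\norm{e^{tA}U_0}_{\Ho}\approx t^{-2/3}$ is equivalent to $\norm{(is-A)^{-1}}_{\Ho\to\Ho}\approx\abs{s}^{3/2}$ as $\abs{s}\to\infty$, together with $i\R\subset\rho(A)$. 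Eliminating $u_t$, solving $(is-A)U=(f,g)$ reduces to the scalar Helmholtz-type problem
\[
  (-\Delta - s^2 + 2ias)\,u = g+(is+2a)f,\qquad u|_{\partial\Box}=0,
\]
where $\norm{U}_{\Ho}\approx\abs{s}\,\norm{u}_{L^2}+\norm{\nabla u}_{L^2}$ and the datum satisfies $\norm{(f,g)}_{\Ho}=1$. I therefore reduce the whole theorem to proving this resolvent asymptotic.

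Since $a=a(x)$, I would expand in the Dirichlet basis $\sin(n\pi y)$ and pass to a one-dimensional family for $\phi=\phi_n(x)$,
\[
  -\phi'' + \bigl((n\pi)^2 - s^2 + 2ia(x)s\bigr)\phi = h_n,\qquad \phi(0)=\phi(1)=0,
\]
with $\mu:=(n\pi)^2-s^2$. Because $a$ is piecewise constant, each problem is explicitly solvable: $\phi$ is trigonometric on the undamped interval $(\sigma,1)$ and a combination of $e^{\pm\kappa x}$ on $(0,\sigma)$ with $\kappa=\sqrt{\mu+2ia_0 s}$, matched across $\sigma$, so the resolvent has an explicit Green's function whose $L^2\to L^2$ norm I would bound in three regimes. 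For $\mu\gtrsim\abs{s}$ (evanescent modes) the operator is coercive and the norm is $O(1/\abs{s})$; for $-\mu\gtrsim\abs{s}$ (transversal modes the rays cross the strip) the dissipation identity $\int a\abs{u}^2=\tfrac1{2s}\mathrm{Im}\,\skprd{h}{u}$ again yields $O(1/\abs{s})$. For each fixed large $s$ only $O(1)$ values of $n$ fall in the remaining near-resonant band $\abs{\mu}\lesssim\abs{s}$, i.e.\ $\abs{s-n\pi}\lesssim1$, where $s\approx n\pi$ and the transverse profile can sit close to a Dirichlet resonance $\sqrt{-\mu}\approx m\pi/(1-\sigma)$ of the undamped cavity $(\sigma,1)$.

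The heart of the matter, and the step I expect to be hardest, is the sharp bound in this near-resonant band. Here $\mathrm{Re}\,\kappa\sim\mathrm{Im}\,\kappa\sim\sqrt{a_0\abs{s}}\to\infty$, so $\coth(\kappa\sigma)\to1$ and the cavity sees an impedance condition $\phi'(\sigma)\approx\kappa\,\phi(\sigma)$ of size $\sqrt{\abs{s}}$ at the interface; the matching (Wronskian) determinant is, up to a nonvanishing factor,
\[
  \Delta_n(s)=\sqrt{-\mu}\,\cos\bigl(\sqrt{-\mu}\,(1-\sigma)\bigr)+\kappa\,\sin\bigl(\sqrt{-\mu}\,(1-\sigma)\bigr),
\]
which can be small only when $\sqrt{-\mu}\,(1-\sigma)$ is close to an integer multiple of $\pi$. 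A careful expansion at such a point, retaining $\mathrm{Im}\,\kappa\sim\sqrt{a_0\abs{s}}$, gives a lower bound on $\abs{\Delta_n(s)}$ that is \emph{uniform in $n$} and forces the scalar estimate $\norm{\phi_n}_{L^2}\lesssim\abs{s}^{1/2}\norm{h_n}_{L^2}$, the extremal case being $m=1$. Reassembling the modes then yields $\norm{U}_{\Ho}\lesssim\abs{s}^{3/2}$, which is the upper resolvent bound and pins the exponent to $3/2$; moreover $i\R\subset\rho(A)$ follows because any eigenfunction at a real frequency must vanish on $\{x<\sigma\}$ by the dissipation identity and hence everywhere by the explicit ODE. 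Proving the uniform lower bound on $\Delta_n$, as $n\to\infty$ and across all positions of $\sqrt{-\mu}$ relative to the resonant lattice, is the main obstacle.

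For the matching lower bound, hence optimality, I would make the same resonance analysis quantitative and locate genuine eigenvalues of $A$. Taking $m=1$ and letting $n\to\infty$ with $s^2\approx(n\pi)^2+(\pi/(1-\sigma))^2$, the near-cancellation in $\Delta_n$ produces eigenfrequencies $\omega_n$ with $\mathrm{Re}\,\omega_n\to\infty$ and $\abs{\mathrm{Im}\,\omega_n}\sim\abs{\mathrm{Re}\,\omega_n}^{-3/2}$, the eigenfunctions being concentrated in the undamped region with a single transverse bump and an approximate node at the interface. Testing the resolvent against these approximate eigenfunctions gives $\norm{(is-A)^{-1}}_{\Ho\to\Ho}\gtrsim\abs{s}^{3/2}$ along $s=\mathrm{Re}\,\omega_n$, and by the converse half of the Borichev--Tomilov framework this forbids any energy decay faster than $t^{-4/3}$. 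Combining the two resolvent bounds yields $\norm{(is-A)^{-1}}_{\Ho\to\Ho}\approx\abs{s}^{3/2}$ and therefore the claimed $E(t,U_0)^{1/2}\approx t^{-2/3}$.
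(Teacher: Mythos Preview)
Your overall route is the same as the paper's: reduce via Borichev--Tomilov and the standard $P(s)\leftrightarrow(is-A)$ correspondence to the two-sided bound $\|P(s)^{-1}\|_{L^2\to L^2}\approx|s|^{1/2}$, separate variables in $y$, solve the resulting one-dimensional problems explicitly via the piecewise trigonometric/exponential ansatz, bound the Green's function through a lower bound on the matching (Wronskian) determinant, and obtain optimality from Nonnenmacher-type eigenvalues with $|\mathrm{Im}\,\omega_n|\sim|\mathrm{Re}\,\omega_n|^{-3/2}$. Your analysis of the near-resonant band $|\mu|\lesssim|s|$ (the paper's regime $c\le k\le cs^{1/2}$) is exactly where the $s^{1/2}$ comes from, and your determinant $\Delta_n$ is, up to the factor $\sin(k'\sigma)/k'$, the paper's $\det M$ after using $\cot(k'\sigma)\to i$.

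There is one genuine gap. For the ``transversal'' regime $-\mu\gtrsim|s|$, i.e.\ $k\gtrsim s^{1/2}$, the dissipation identity $2sa_0\int_0^\sigma|\phi|^2=\mathrm{Im}\,\langle h,\phi\rangle$ controls $\phi$ only on the damped interval $(0,\sigma)$; it says nothing about $\|\phi\|_{L^2(\sigma,1)}$, and hence does not by itself yield $O(1/|s|)$ (or even $O(1)$). Your description of these modes as ``rays cross the strip'' is misleading at the lower edge of the regime: when $k\sim s^{1/2}$ the ray angle with the vertical is $\sim k/s\sim s^{-1/2}\to 0$, so these are nearly grazing modes, and the undamped cavity $(\sigma,1)$ can still resonate at $k(1-\sigma)\approx m\pi$ for large $m$. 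What saves you there is not dissipation but the same Wronskian lower bound: the paper splits $k\gtrsim s^{1/2}$ into three further sub-regimes ($k\approx s^{1/2}$, $s^{1/2}\ll k\ll s$, $k\approx s$) and checks in each that the bracketed factor in $\det M$ stays bounded away from zero, using that $\cot(k'\sigma)$ is close to $i$ in the first two and that $k'/k\to 1$ with $\mathrm{Im}\,k'$ bounded away from $0$ in the third. You already have the explicit Green's function, so the fix is simply to extend your determinant analysis from the band $k\lesssim s^{1/2}$ to all $k$; the dissipation shortcut should be dropped.
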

The exact meaning of `$\approx$` and other symbols is explained in Section \ref{sec: Notation}. In Section \ref{sec: Exact Decay Rate for the Damped Wave Equation} we show that this theorem is equivalent to Theorem \ref{thm: Maintheorem} below. Section \ref{sec: Proof of Theorem} is devoted to the proof of Theorem \ref{thm: Maintheorem}.
  \begin{Remark}
    The proof of Theorem \ref{thm: goal} shows that a higher dimensional analogue is also true. That is, one can replace $y\in\R$ by $y\in\R^{d-1}$ for any natural number $d\geq 2$. The exact decay rate remains the same for all $d$.
  \end{Remark}
  
\subsection{The semigroup approach}
If we set $U=(u,u_t)$ and $U_0=(u_0,u_1)$ we may formulate the damped wave equation as an abstract Cauchy problem
  \begin{align*}
    \dot{U}(t)+AU(t)=0,\, U(0)=U_0\, \text{ where }\,
    A = 
    \begin{pmatrix}
      0       & -1 \\
      -\Delta & 2a(x)
    \end{pmatrix}
  \end{align*}
on the Hilbert space $\Ho=H^1_0\times L^2(\Box)$. The domain of $A$ is $D(A) = (H^2\cap H^1_0) \times H^1_0(\Box)$. Since $-A$ is a dissipative (we equip $H^1_0(\Box)$ with the gradient norm) and invertible operator on a Hilbert space it generates a $C_0$-semigroup of contractions by the Lumer-Phillips theorem. Note that the inclusion $D(A)\hookrightarrow \Ho$ is compact by the Rellich-Kondrachov theorem. Thus the spectrum of $A$ contains only eigenvalues of finite multiplicity.

\subsection{Classification of the main result}
Our situation is a very particular instance of the so called \emph{partially rectangular} situation. A bounded domain $\Omega$ is called \emph{partially rectangular} if its boundary $\partial\Omega$ is piecewise $C^{\infty}$ and if $\Omega$ contains an open rectangle $R$ such that two opposite sides of $R$ are contained in $\partial\Omega$. We call these two opposite sides \emph{horizontal}. One can decompose $\overline{\Omega}=\overline{R}\cup \overline{W}$, where $W$ is an open set which is disjoint to $R$. In our particular situation we can $W$ choose to be empty. Furthermore it is assumed, that $a>0$ on $\overline{W}$ and $a=0$ on $S$, where $S\subseteq R$ is an open rectangle with two sides contained in the horizontal sides of $R$. To avoid the discussion of null-sets we assume for simplicity that either $a$ is continuous up to the boundary or it is as in subsection \ref{sec: The main result}.  

Under these constraints one can show that the energy of classical solutions can never decay uniformly faster than $1/t^2$, i.e.
  \begin{equation}\label{eq: fast bound}
    \sup_{U_0\in D(A)} E(t, U_0)^{\frac{1}{2}} \gtrsim \frac{1}{t}.
  \end{equation}
This result seems to be well-known. Unfortunately we do not know an original reference to this bound on the energy. A short modern proof using \citep[Proposition 1.3]{BattyDuyckaerts2008} can be found in \citep{AnantharamanLeautaud2014}. But there is also a \emph{geometric optics} proof using quantified versions of the techniques of \citep{Ralston1969}. Unfortunately the latter approach seems to be never published anywhere.

On the other hand: If we assume that the damping does not vanish completely in $R$ (this is an additional assumption only if $W$ is empty), then 
  \begin{equation}\label{eq: slow bound}
    \forall U_0\in D(A): E(t, U_0)^{\frac{1}{2}} \lesssim \frac{1}{t^{\frac{1}{2}}}.
  \end{equation}
This is a corollary of one of the main results in \citep{AnantharamanLeautaud2014}. There the authors showed that \emph{stability at rate} $t^{-1/2}$ for an \emph{abstract} damped wave equation is equivalent to an observability condition for a related Schr\"{o}dinger equation. Earlier contributions towards (\ref{eq: slow bound}) were given by \citep{BurqHitrik2007} and \citep{LiuRao2005}.

Having the two bounds (\ref{eq: fast bound}) and (\ref{eq: slow bound}) at hand a natural question arises: Are these bounds sharp? Concerning the fast decay rates related to (\ref{eq: fast bound}) this is partly answered by \citep{BurqHitrik2007} and \citep{AnantharamanLeautaud2014}. Essentially the authors showed that if the damping function is smooth enough than one can get a decay rate as close to $t^{-1}$ as we wish. Unfortunately they could not \emph{characterize} the \emph{exact} decay rate in terms of properties of $a$. A breakthrough into this direction was achieved in \citep{LeautaudLerner2014} in a slightly different situation (there $S$ degenerates to a line).

To the best of our knowledge it is completely unknown if the slowest possible rate $t^{-1/2}$ is attained. To us the only known result towards this direction is due to Nonnenmacher: If we are in the very particular situation described in subsection \ref{sec: The main result} then
  \begin{equation}\nonumber 
    \sup_{U_0\in D(A)} E(t, U_0)^{\frac{1}{2}} \gtrsim \frac{1}{t^{\frac{2}{3}}}.
  \end{equation}
See \citep[Appendix B]{AnantharamanLeautaud2014}. So this situation is a candidate for the slow decay rate. In this paper we show that Nonnenmacher's bound is actually equal to the exact decay rate. 

This of course raises a new question: Is it possible to find a non-vanishing bounded damping in a partially rectangular domain, satisfying the constraints specified above, but discarding the continuity assumptions, such that the exact decay rate for $E(t, U_0)^{\frac{1}{2}}$ is strictly slower than $t^{-2/3}$? We think this is an interesting question for future research.

\subsection{From waves to stationary waves}
Let $f\in L^2(\Box)$. Now we consider the stationary damped wave equation with Dirichlet boundary conditions
  \begin{equation}\label{eq: Stationary Wave Equation}
    \left\{
    \begin{array}{rl}
      P(s)u(x,y) = (-\Delta - s^2 + 2isa(x))u(x,y) = f(x,y) & \text{in } \Box \\
      u(x,y) = 0 & \text{on } \partial\Box
    \end{array}
    \right.
  \end{equation}
As already said above, to prove Theorem \ref{thm: goal} is essentially to show
  \begin{Theorem}\label{thm: Maintheorem}
    The operator $P(s):H^2\cap H^1_0(\Box)\rightarrow L^2(\Box)$ from (\ref{eq: Stationary Wave Equation}) is invertible for every $s\in\R$. Moreover
      \begin{equation}\nonumber
        \OpNormLzwei{P(s)^{-1}} \approx 1 + \abs{s}^{\frac{1}{2}} .
      \end{equation}
  \end{Theorem}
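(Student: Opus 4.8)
Since $a$ depends only on $x$ and $\Box$ is a product domain, the natural first move is to diagonalise in the transverse variable. Expanding $u=\sum_{n\geq 1}u_n(x)\sin(n\pi y)$ and $f$ in the same basis decouples (\ref{eq: Stationary Wave Equation}) into the family of one-dimensional Dirichlet problems $P_n(s)u_n:=-u_n''+(n^2\pi^2-s^2+2isa(x))u_n=f_n$ on $(0,1)$. As these blocks are mutually orthogonal in $L^2(\Box)$, one has $\OpNormLzwei{P(s)^{-1}}=\sup_{n\geq 1}\norm{P_n(s)^{-1}}$, the norm on the right being that of an operator on $L^2(0,1)$, and the whole theorem reduces to two–sided control of these scalar resolvents, \emph{uniform in} $n$. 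I would organise the analysis by the longitudinal parameter $z=s^2-n^2\pi^2$: for $z\lesssim -1$ the operator $P_n$ is elliptic and $\norm{P_n(s)^{-1}}=O(1)$; for $z$ of order $s$ up to order $s^2$ (near–normal incidence) the damping acts with an $O(1)$ absorption length and again controls the resolvent; the delicate case is the near–threshold band $\abs{z}\lesssim s$ of grazing modes, and this band is where the power $\abs{s}^{1/2}$ is produced.

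The structural mechanism I would isolate is that on the damped interval $(0,\sigma)$ the zeroth–order coefficient is $2isa_0-z$, whose square root $\lambda=\sqrt{2isa_0-z}$ satisfies $\mathrm{Re}\,\lambda\gtrsim\abs{s}^{1/2}$ throughout the grazing band; solutions therefore decay on the skin–depth scale $\abs{s}^{-1/2}$, and after solving the ODE on $(0,\sigma)$ with the Dirichlet condition at $0$ the damped region is felt by the undamped region only through a dissipative impedance (Dirichlet–to–Neumann) constant $\beta=\lambda\coth(\lambda\sigma)$ with $\mathrm{Re}\,\beta\gtrsim\abs{s}^{1/2}$. For the upper bound I would first record the imaginary–part identity $\mathrm{Im}\,\skprd{P_n(s)u_n}{u_n}=2s\int a\abs{u_n}^2$, which bounds the mass in the damped region by $\norm{f_n}\norm{u_n}/\abs{s}$, and then propagate this control into $(\sigma,1)$ with a Rellich–Morawetz multiplier of the form $\overline{(x-1)u_n'}$, whose boundary contribution at $x=\sigma$ is exactly governed by the impedance $\beta$. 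It is the size $\mathrm{Re}\,\beta\gtrsim\abs{s}^{1/2}$ of this dissipation that upgrades the generic $\abs{s}$–type estimate to $\norm{P_n(s)^{-1}}\lesssim 1+\abs{s}^{1/2}$; equivalently, in the explicit Green's function for the reduced Helmholtz problem on $(\sigma,1)$ (Dirichlet at $1$, impedance $\beta$ at $\sigma$, wavenumber $\tau=\sqrt z$), the Wronskian stays bounded away from $0$ by the dissipativity of $\beta$, so that no genuine real resonance occurs and the supremum over $n$ is attained, up to constants, by the lowest cavity mode and equals a multiple of $\abs{s}^{1/2}$.

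For the matching lower bound it suffices to produce quasimodes along a sequence $s_j\to\infty$. I would choose $s_j$ so that, for a suitable transverse index $n_j$, the parameter $z=s_j^2-n_j^2\pi^2$ lies within $O(s_j^{-1/2})$ of the first Dirichlet eigenvalue $(\pi/L)^2$ of the undamped interval, $L=1-\sigma$, and take $u$ equal to $\sin(\tau(1-x))$ on $(\sigma,1)$, continued into $(0,\sigma)$ as the exponentially small skin–depth solution matching $C^1$ at $\sigma$. Because this true cavity mode leaks into the damped region only at amplitude $O(s_j^{-1/2})$, a direct estimate gives $\norm{P_{n_j}(s_j)u}\lesssim s_j^{-1/2}\norm{u}$, whence $\norm{P_{n_j}(s_j)^{-1}}\gtrsim s_j^{1/2}$, which is the lower half of the claimed equivalence. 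Invertibility for \emph{every} real $s$ is then immediate from the imaginary–part identity: if $P(s)u=0$ then $s\int a\abs{u}^2=0$, so $u\equiv 0$ on $\{x<\sigma\}$, and the vanishing Cauchy data at $x=\sigma$ force each $u_n$, and hence $u$, to vanish on all of $\Box$.

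The hard part will be the upper bound in the grazing band: one must quantify that the thin absorbing layer of width $\sim\abs{s}^{-1/2}$ supplies dissipation of size \emph{exactly} $\abs{s}^{1/2}$—enough to suppress every near–resonance of the undamped cavity but no more—and establish this uniformly in $n$ and across the transitions between the elliptic, grazing, and near–normal regimes. This is precisely where the non–self–adjointness of $P_n(s)$ makes the Wronskian/multiplier bookkeeping delicate and where the precise exponent $1/2$ is decided.
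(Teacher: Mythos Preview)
Your plan is sound and identifies the correct mechanism, but it takes a genuinely different route from the paper's proof, so a brief comparison is in order.

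\textbf{Shared skeleton.} Both you and the paper separate variables, reduce to $\sup_n\|P_n(s)^{-1}\|$, dispose of the elliptic range ($k^2=s^2-n^2\pi^2\leq c$) by a direct coercivity argument, and split the remaining propagating modes into regimes according to the size of $k$ relative to $s^{1/2}$ and $s$. The paper uses four overlapping windows $c\leq k\lesssim s^{1/2}$, $k\approx s^{1/2}$, $s^{1/2}\lesssim k\lesssim s$, $k\approx s$; your three zones (elliptic / grazing $|z|\lesssim s$ / near--normal) cover the same ground slightly more coarsely. Both approaches locate the worst case---and the exponent $1/2$---in the grazing band, coming from the skin depth $|{\rm Im}\,k'|^{-1}\sim s^{-1/2}$ on the damped interval.

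\textbf{Where the arguments diverge.} For the upper bound the paper is entirely computational: it writes down the full variation--of--constants solution on $(0,\sigma)$ and $(\sigma,1)$, solves the $2\times2$ transmission system at $x=\sigma$, and obtains the closed inequality
\[
\|u_n\|\ \lesssim\ \Bigl[\tfrac{e^{|{\rm Im}\,k'|\sigma}}{|k'\det M|}\bigl(1+\tfrac{|k'|}{k^2}\bigr)+\tfrac{1}{k}\Bigr]\|f_n\|,
\]
and then bounds $|k'\det M|$ from below, regime by regime, by explicit Taylor expansion of $k'$; your Wronskian is exactly this $\det M$. You instead propose to encode the damped half--line as a dissipative Dirichlet--to--Neumann datum $\beta=k'\cot(k'\sigma)$ with $\mathrm{Re}\,\beta\gtrsim s^{1/2}$, combine the imaginary--part identity with a Rellich--Morawetz multiplier $(x-1)\overline{u_n'}$ on $(\sigma,1)$, and let the boundary term at $\sigma$ carry the dissipation. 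This is more conceptual and closer in spirit to Burq--type resolvent estimates; it explains \emph{why} the answer is $s^{1/2}$ without computing $\det M$. The price is that the multiplier bookkeeping (controlling the trace $|u_n(\sigma)|$ from the damped--region mass, handling the inhomogeneous contribution to the impedance, and treating the transition $k\sim s$ where $|\mathrm{Im}\,k'|$ is merely $O(1)$ and $\cot(k'\sigma)$ is \emph{not} close to $i$) is more delicate than the paper's direct computation and is precisely the step you flag as ``the hard part''.

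\textbf{Lower bound.} The paper does not construct quasimodes itself; it quotes Nonnenmacher's spectral result (eigenvalues with $\mathrm{Re}\,z_j\lesssim(\mathrm{Im}\,z_j)^{-3/2}$) together with the equivalence $\|(is+A)^{-1}\|\approx s\|P(s)^{-1}\|$. Your direct quasimode---a cavity eigenfunction on $(\sigma,1)$ tuned to $z\approx(\pi/L)^2$ with a skin--depth tail in $(0,\sigma)$---is essentially the same construction rewritten for $P(s)$ rather than for $A$, and is a perfectly good self--contained substitute.

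In short: your outline is correct; the paper trades your structural multiplier/impedance argument for an explicit Green's--function computation, which is less illuminating but avoids the delicate endpoint bookkeeping you would still have to carry out.
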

Actually we only prove a $\lesssim$-inequality since the reverse inequality is a consequence of Nonnenmacher's appendix to \citep{AnantharamanLeautaud2014} together with Proposition 2.4 in the same paper (see Section \ref{sec: Exact Decay Rate for the Damped Wave Equation} for more details). Since it is well-known we also do not prove the invertability of $P(s)$. The (simple) standard proof is based on testing the homogeneous stationary wave equation with $\overline{u}$. From considering real and imaginary part of the resulting expression one easily checks $u=0$ by a \emph{unique continuation principle}. 

\subsection*{Acknowledgments} This paper was inspired and motivated by \citep[Appendix B (by S. Nonnenmacher)]{AnantharamanLeautaud2014} and \citep{BattyPaunonenSeifert2015}. I am grateful to Ralph Chill for reading and correcting the very first version of this paper.

\section{Notations and conventions}\label{sec: Notation}
\emph{Convention}. Because of the symmetry of (\ref{eq: Stationary Wave Equation}) we have $\OpNormLzwei{P(-s)^{-1}}=\OpNormLzwei{P(s)^{-1}}$. Therefore in the following we always assume $s$ to be \emph{positive}.

\emph{Constants}. We use two special constants $c>0$ and $C>0$. Special means, that they may change their value from line to line. The difference between these two constants is, that their usage implicitly means that we could always replace $c$ by a smaller constant and $C$ by a larger constant - \emph{if this is necessary}. So one should keep in mind that $c$ is a small number and $C$ a large number. 

\emph{Landau notation}. For this subsection let us denote by $\phi,\phi_1,\phi_2$ and $\psi$ complex valued functions defined on $\R\backslash K$, where $K$ is a compact interval. Furthermore we always assume $\phi,\phi_1$ and $\phi_2$ to be real valued and (not necessary strictly) positive.   
We define
  \begin{align*}
    \phi_1(s) \lesssim \phi_2(s) &:\Leftrightarrow \exists s_0>0, C>0\forall\abs{s}\geq s_0: \phi_1(s)\leq C\phi_2(s), \\
    \phi_1(s) \approx \phi_2(s)  &:\Leftrightarrow \phi_1(s) \lesssim \phi_2(s) \text{ and } \phi_2(s) \lesssim \phi_1(s).
  \end{align*}
Furthermore we define the following classes (sets) of functions:
  \begin{align*}
    O(\phi(s)) &:= \{\psi; \abs{\psi(s)}\lesssim\phi(s)\}, \\
    o(\phi(s)) &:= \{\psi; \forall\varepsilon>0 \exists s_{\varepsilon}>0\forall\abs{s}\geq s_{\varepsilon}: \abs{\psi(s)}\leq \varepsilon\phi(s)\}.
  \end{align*}
By abuse of notation we write for example $\psi(s)=O(\phi(s))$ instead of $\psi\in O(\phi(s))$ or $\phi(s)= \phi_1(s) + O(\phi_2(s))$ instead of $\abs{\phi(s)-\phi_1(s)} \lesssim \phi_2(s)$. By $O(s^{-\infty})$ we denote the intersection of all $O(s^{-N})$ for $N\in \N$.

\emph{Function spaces.} As usual, by $L^2(\Omega)$ we mean the space of square-integrable functions on some open subset $\Omega$ of $\R^n$ for some $n\in\N$. For $k$ a natural number $H^k(\Omega)$ denotes the space of functions from $L^2(\Omega)$ whose distributional derivatives up to order $k$ are square integrable, too. Finally the space $H^1_0(\Omega)$ denotes the closure of the set of compactly supported smooth functions in $H^1(\Omega)$. We equip $H^1_0(\Omega)$ with the norm $(\int_{\Omega}\abs{\nabla u}^2dx)^{1/2}$ which is equivalent to the usual norm.

\section{Proof of Theorem \ref{thm: Maintheorem}}\label{sec: Proof of Theorem}
Here is the plan for the proof: First we separate the $y$-dependence of the stationary wave equation from the problem. As a result we are dealing with a family of one dimensional problems which are parametrized by the vertical wave number $n\in\N$. Then we derive explicit solution formulas for the separated problems. These formulas allow us to estimate the solutions of the separated problems by their right-hand side with a constant essentially depending \emph{explicitly} on $s$ and $n$. In the final step we introduce appropriate regimes for $s$ relative to $n$ which allow us to drop the $n$-dependence of the constant by a (short) case study.

\subsection{Separation of variables}\label{sec: Separation of Variables}
First recall that the functions $s_n(y)=\sqrt{2}\sin(n\pi y)$ for $n\in\{1,2,\ldots\}$ form a complete orthonormal system of $L^2(0,1)$. Thus considering $u$ and $f$ satisfying (\ref{eq: Stationary Wave Equation}) we may write
  \begin{align}\label{eq: Separation of u and f}
    u(x,y) = \sum_{n=1}^{\infty} u_n(x)s_n(y) \text{ and }
    f(x,y) = \sum_{n=1}^{\infty} f_n(x)s_n(y).
  \end{align}
In terms of this separation of variables the stationary wave equation is equivalent to the one dimensional problem $P_n(s)u_n=f_n$ where
  \begin{align}
    P_n(s) = -\partial_x^2 - k_n^2 + 2isa(x), \text{ and} \label{eq: separated Stationary Wave Operator} \\ \nonumber
    k_n^2 = s^2 - (n\pi)^2 .
  \end{align}
Note that $k_n$ might be an imaginary number. In a few lines we see that only the real case is important. In that case we choose $k_n\geq0$. But first we prove the following simple
  \begin{Lemma}\label{thm: P to Pn Lemma}
    Let $\phi:\R\rightarrow (0,\infty)$. Then the estimate $\OpNormLzwei{P_n(s)^{-1}}\lesssim\phi(s)$ uniformly in $n$ is equivalent to the estimate $\OpNormLzwei{P(s)^{-1}}\lesssim\phi(s)$.
  \end{Lemma}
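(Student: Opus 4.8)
The plan is to realize $P(s)$ as a direct sum of the one-dimensional operators $P_n(s)$ under the unitary identification furnished by the sine expansion in the $y$-variable, and then to invoke the elementary fact that the operator norm of a block-diagonal operator equals the supremum of the norms of its blocks. The clause ``uniformly in $n$'' is precisely what turns that supremum into the single bound $\phi(s)$.

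First I would set up the unitary map. Since $\{s_n\}_{n\geq 1}$ is a complete orthonormal system of $L^2(0,1)$, Parseval's identity shows that the assignment $u\mapsto (u_n)_n$, with $u_n(x)=\int_0^1 u(x,y)\,s_n(y)\,dy$, defines a unitary operator from $L^2(\Box)$ onto $\ell^2(\N;L^2(0,1))$; in particular $\normLzwei{u}^2=\sum_{n\geq1}\normLzwei{u_n}^2$, and likewise for $f$. The only structural inputs are that $a$ depends on $x$ alone and that $-\partial_y^2 s_n=(n\pi)^2 s_n$, with the Dirichlet conditions at $y=0,1$ already encoded in the sine basis. Hence, exactly as recorded in (\ref{eq: Separation of u and f}) and (\ref{eq: separated Stationary Wave Operator}), the equation $P(s)u=f$ is equivalent to the countable family $P_n(s)u_n=f_n$, and the unitary conjugates $P(s)^{-1}$ into the diagonal operator acting as $P_n(s)^{-1}$ on the $n$-th summand.

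Granting this, both implications are short. For the forward direction, assume $\OpNormLzwei{P_n(s)^{-1}}\leq C\phi(s)$ for every $n$ and every $\abs{s}\geq s_0$, with $C$ and $s_0$ independent of $n$. Writing $u=P(s)^{-1}f$ and using $u_n=P_n(s)^{-1}f_n$, Parseval gives
\[\normLzwei{u}^2=\sum_{n\geq1}\normLzwei{u_n}^2\leq C^2\phi(s)^2\sum_{n\geq1}\normLzwei{f_n}^2=C^2\phi(s)^2\normLzwei{f}^2,\]
so $\OpNormLzwei{P(s)^{-1}}\lesssim\phi(s)$. For the converse, I would test against a single mode: given $n$ and $g\in L^2(0,1)$, apply $P(s)^{-1}$ to $f=g\,s_n$. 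Then $u=(P_n(s)^{-1}g)\,s_n$, and by orthonormality $\normLzwei{f}=\normLzwei{g}$ and $\normLzwei{u}=\normLzwei{P_n(s)^{-1}g}$, whence $\normLzwei{P_n(s)^{-1}g}\leq\OpNormLzwei{P(s)^{-1}}\normLzwei{g}$. Taking the supremum over $g$ yields $\OpNormLzwei{P_n(s)^{-1}}\leq\OpNormLzwei{P(s)^{-1}}$ for every $n$, which is exactly the bound by $\phi(s)$ uniformly in $n$.

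The argument is essentially bookkeeping, and I do not expect a genuine obstacle. The one point that deserves care is the rigorous justification that separation of variables really identifies $P(s)$ with $\bigoplus_n P_n(s)$, including matching the domain $H^2\cap H^1_0(\Box)$ with the appropriate weighted direct sum of the spaces $H^2\cap H^1_0(0,1)$; the accompanying statement that $P(s)$ is invertible with uniformly bounded inverse if and only if each $P_n(s)$ is, falls out of the same diagonalization. I would therefore devote the bulk of the write-up to verifying the unitary equivalence, after which the two norm estimates are immediate.
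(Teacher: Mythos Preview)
Your proposal is correct and follows essentially the same route as the paper: the forward implication is the Parseval chain $\normLzwei{u}^2=\sum_n\normLzwei{u_n}^2\lesssim\phi(s)^2\sum_n\normLzwei{f_n}^2=\phi(s)^2\normLzwei{f}^2$, and the reverse is obtained by testing $P(s)^{-1}$ on the single-mode data $f(x,y)=f_n(x)s_n(y)$. The paper's write-up is terser and does not dwell on the domain matching, but the argument is the same.
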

  \begin{proof}
    Let $P(s)u=f$ and expand $u$ and $f$ as in (\ref{eq: Separation of u and f}). Then the implication from the left to the right is a consequence of the following chain of equations and inequalities:
    \begin{align*}
      \normLzwei{u}^2 &= \sum_{n=1}^{\infty} \normLzwei{u_n}^2 
      \lesssim \phi(s)^2 \sum_{n=1}^{\infty} \normLzwei{f_n}^2 
      = \phi(s)^2 \normLzwei{f}^2.
    \end{align*}
    The reverse implication follows from looking at $f(x,y)=f_n(x)s_n(y)$ and $u(x,y)=u_n(x)s_n(y)$. 
  \end{proof}
So below we are concerned with the separated stationary wave equation
  \begin{equation}\label{eq: separated Stationary Wave Equation}
    \left\{
    \begin{array}{rl}
      P_n(s)u_n(x) = f_n(x) & \text{for } x\in(0,1) \\
      u_n(0) = u_n(1) = 0 &
    \end{array}
    \right.
  \end{equation}
where $P_n(s)$ is defined in (\ref{eq: separated Stationary Wave Operator}). In view of Lemma \ref{thm: P to Pn Lemma} we are left to show $\normLzwei{u_n}\lesssim s^{1/2}\normLzwei{f_n}$ uniformly in $n$ in order to prove Theorem \ref{thm: Maintheorem}. It turns out that such an estimate is easy to prove if $k_n$ is imaginary. More precisely:
  \begin{Lemma}\label{thm: forbidden s}
    There exists a constant $c>0$ such that $\norm{P_n(s)^{-1}}_{L^2\rightarrow H^1_0} \lesssim 1$ holds uniformly in $n$ whenever $s^2 \leq (n\pi)^2 + c$.
  \end{Lemma}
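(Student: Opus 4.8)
The plan is to prove this bound by a direct energy estimate, testing the separated equation (\ref{eq: separated Stationary Wave Equation}) against the complex conjugate $\overline{u_n}$. Recalling that the $H^1_0$-norm is the gradient norm, so that $\norm{u_n}_{H^1_0}=\normLzwei{u_n'}$, the goal is the a priori bound $\normLzwei{u_n'}\lesssim\normLzwei{f_n}$ with a constant independent of $n$ (invertibility of $P_n(s)$ being already granted). The decisive observation is that the hypothesis $s^2\leq(n\pi)^2+c$ is exactly the statement $k_n^2\leq c$: the zeroth-order term $-k_n^2 u_n$ then sits below the spectral floor $\pi^2$ of $-\partial_x^2$ on $(0,1)$ and can therefore be absorbed.

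Concretely, I would multiply $P_n(s)u_n=f_n$ by $\overline{u_n}$, integrate over $(0,1)$, and integrate the second-derivative term by parts; the Dirichlet conditions $u_n(0)=u_n(1)=0$ kill the boundary contributions. This yields
  \begin{equation}\nonumber
    \normLzwei{u_n'}^2 - k_n^2\normLzwei{u_n}^2 + 2is\int_0^1 a(x)\abs{u_n}^2\,dx = \skprd{f_n}{u_n}.
  \end{equation}
Since the damping contribution is purely imaginary, taking real parts removes it entirely and leaves
  \begin{equation}\nonumber
    \normLzwei{u_n'}^2 = k_n^2\normLzwei{u_n}^2 + \mathrm{Re}\,\skprd{f_n}{u_n}.
  \end{equation}
Note that $a$ plays no role here; in this regime the estimate is governed purely by the elliptic part, which is why the damping need not be quantified.

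To finish, I would use that $k_n^2\leq c$ in the relevant regime (if $k_n$ is imaginary this term is even non-positive, so $k_n^2\normLzwei{u_n}^2\leq c\,\normLzwei{u_n}^2$ holds in all cases), together with the Poincar\'e inequality $\normLzwei{u_n}\leq\pi^{-1}\normLzwei{u_n'}$ valid on $H^1_0(0,1)$. Combining these with Cauchy--Schwarz gives
  \begin{equation}\nonumber
    \normLzwei{u_n'}^2 \leq \frac{c}{\pi^2}\normLzwei{u_n'}^2 + \frac{1}{\pi}\normLzwei{f_n}\normLzwei{u_n'}.
  \end{equation}
Choosing $0<c<\pi^2$ lets me absorb the first term on the right into the left-hand side, whence $\normLzwei{u_n'}\leq(\pi-c/\pi)^{-1}\normLzwei{f_n}$, uniformly in $n$ and $s$. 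There is in fact no serious obstacle: the only point requiring care is the choice of $c$, which must be strictly below the Dirichlet eigenvalue $\pi^2$ so that the absorption step is legitimate. This is exactly why the lemma asserts the existence of such a $c$ rather than giving an explicit value.
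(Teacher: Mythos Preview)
Your proof is correct and follows essentially the same route as the paper: test (\ref{eq: separated Stationary Wave Equation}) against $\overline{u_n}$, take the real part to discard the (purely imaginary) damping term, and absorb $k_n^2\normLzwei{u_n}^2$ using the Poincar\'e inequality $\normLzwei{u_n'}^2\geq\pi^2\normLzwei{u_n}^2$ provided $c<\pi^2$. The paper's argument is simply a more terse version of exactly this computation.
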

  Note that $P_n(s)^{-1}$ is considered as an operator mapping to $H^1_0(0,1)$. But it does not really matter since we will only use this estimate after replacing $H^1_0$ by $L^2$.
  \begin{proof}
  Testing equation (\ref{eq: separated Stationary Wave Equation}) by $\overline{u}_n$ and taking the real part leads to
    \begin{equation}\nonumber
      \int_0^1\abs{u'_n}^2 - c\int_0^1\abs{u_n}^2 \leq \int_0^1 \abs{f_n u_n} .
    \end{equation}
  Recall that $\normLzwei{v'}^2\geq\pi^2\normLzwei{v}^2$ for all $v\in H^1_0(0,1)$ since $\pi^2$ is the lowest eigenvalue of the Dirichlet-Laplacian on the unit interval. Thus the conclusion of the Lemma holds for all $c<\pi^2$.
  \end{proof}
This lemma allows us to assume
  \begin{equation}\label{eq: k is real}
    k_n = \sqrt{s^2 - (n\pi)^2} > c
  \end{equation}
for some universal constant $c>0$ not depending on neither $s$ nor $n$. 

\subsection{Explicit formula for $P_n(s)^{-1}$}\label{sec: Explicit formula for P_n(s)^{-1}} 
From now on we consider (\ref{eq: separated Stationary Wave Equation}) under the constraint (\ref{eq: k is real}). To avoid cumbersome notation we drop the subscript $n$ from $k_n$, i.e. we write $k$ instead from now on. Next let $v=u_n|_{[0,\sigma]}, g=f_n|_{(0,\sigma)}$ and $w=u_n|_{[\sigma,1]}, h=f_n|_{(\sigma,1)}$. We may write (\ref{eq: separated Stationary Wave Equation}) as a coupled system consisting of a wave equation with constant damping and an undamped wave equation:
  \begin{equation}\label{eq: splitted and separated Stationary Wave Equation}
    \left\{
    \begin{array}{rl}
      (-\partial_x^2-k^2+2isa_0)v(x)  = g(x) & \text{for } x\in(0,\sigma),    \\
      (-\partial_x^2-k^2)w(x) = h(x)         & \text{for } x\in(\sigma, 1),   \\
      v(0) = w(1) = 0,                       &                                \\
      v(\sigma) = w(\sigma), v'(\sigma)=w'(\sigma). &
    \end{array}
    \right.
  \end{equation}

\subsubsection{Solution of the homogeneous equation} The following ansatz satisfies the first three lines of (\ref{eq: splitted and separated Stationary Wave Equation}) with $g,h=0$:
  \begin{align}\label{eq: Def of v_0 and w_0}
    v_0(x) = \frac{1}{k'}\sin(k'x), \quad
    w_0(x) = \frac{1}{k}\sin(k(1-x)),
  \end{align}
where $k'$ is the solution of $k'^2 = k^2-2isa_0$ which has negative imaginary part.

\subsubsection{Solution of the inhomogeneous equation} The following ansatz satisfies the first three lines of (\ref{eq: splitted and separated Stationary Wave Equation}):
  \begin{align}\label{eq: Def of v_g and w_h}
    v_g(x) = -\frac{1}{k'}\int_0^x \sin(k'(x-y))g(y) dy, \,
    w_h(x) = -\frac{1}{k}\int_x^1 \sin(k(y-x))h(y) dy.
  \end{align}
This is simply the variation of constants (or Duhamel's) formula. It is useful to know the derivatives of these particular solutions:
  \begin{align}\label{eq: Derivative of v_g and w_h}
    v_g'(x) = -\int_0^x \cos(k'(x-y))g(y) dy, \,
    w_h'(x) = +\int_x^1 \cos(k(y-x))h(y) dy.
  \end{align}

\subsubsection{General solution} The general solution of the first three lines of (\ref{eq: separated Stationary Wave Equation}) has the form
  \begin{align}\label{eq: Def of v and w}
    v = av_0 + v_g, \quad w = bw_0 + w_h .
  \end{align}
Our task is to find the coefficients $a=a(s,n)$ and $b=b(s,n)$. Therefore we have to analyze the coupling condition in line four of (\ref{eq: splitted and separated Stationary Wave Equation}). A short calculation shows that it is equivalent to
  \begin{align*}
    \underbrace{
      \left.
      \begin{pmatrix}
        v_0  & -w_0  \\
        v_0' & -w_0'
      \end{pmatrix}
      \right|_{x=\sigma}
    }_{=:\, M(s,n)}
    \begin{pmatrix}
      a \\
      b
    \end{pmatrix}
    =
    \left.
    \begin{pmatrix}
      w_h  - v_g  \\
      w_h' - v_g'
    \end{pmatrix}
    \right|_{x=\sigma}
    .
  \end{align*}
From the preceding equation we easily deduce
  \begin{align}
    \label{eq: a}
    a &= \frac{1}{\det M}\left[w_0'(v_g-w_h) - w_0(v_g'-w_h)\right]_{x=\sigma},   \\
    \label{eq: b}
    b &= \frac{1}{\det M}\left[v_0'(v_g-w_h) - v_0(v_g'-w_h)\right]_{x=\sigma}.
  \end{align}
Moreover
  \begin{equation}\label{eq: detM}
    \det M = \frac{1}{k'}\sin(k'\sigma)\cos(k(1-\sigma) + \frac{1}{k}\cos(k'\sigma)\sin(k(1-\sigma))) .
  \end{equation}
  
\subsection{Proving a general estimate $\normLzwei{u_n}\leq C(k,k',M)\normLzwei{f_n}$}\label{sec: Proving a general estimate}
For this inequality we will derive an \emph{explicit} formula for $C$ in terms of $k, k'$ and $M$. In the next subsection we identify the qualitatively different regimes in which $s$ can live. By \emph{regime} we mean a relation which says how big $s$ - the full momentum - is compared to $n\pi$ - the momentum in $y$-direction. For each of these regimes we then easily translate the \emph{explicit} $k,k',M$ dependence of $C$ to a an \emph{explicit} dependence on $s$.

\subsubsection{Elementary estimates for $w_0$ and $ w_h$}
Directly from the definition of $w_0$ (see (\ref{eq: Def of v_0 and w_0})) we deduce
  \begin{equation}\label{eq: Elementary estimate for w_0}
    \norm{w_0}_{\infty} \leq \frac{1}{k}, \,
    \norm{w_0'}_{\infty} \leq 1 \text{ and }
    \norm{w_0}_{2} \leq \frac{\sqrt{1-\sigma}}{k}. 
  \end{equation}
In the same manner for $w_h$ from (\ref{eq: Def of v_g and w_h}) and (\ref{eq: Derivative of v_g and w_h}) we deduce:
  \begin{equation}\label{eq: Elementary estimate for w_h}
    \norm{w_h}_{\infty} \leq \frac{\sqrt{1-\sigma}}{k}\norm{h}_2, \,
    \norm{w_h'}_{\infty} \leq \sqrt{1-\sigma}\norm{h}_2 \text{ and }
    \norm{w_h}_{2} \leq \frac{1-\sigma}{k}\norm{h}_2. 
  \end{equation}

\subsubsection{Estimating $w$}
Recall from (\ref{eq: Def of v and w}) that $w=bw_0+w_h$. Recall the formula (\ref{eq: b}) for $b$. Note that 
  \begin{equation}\nonumber
    (v_0'v_g - v_0v_g')(\sigma) = \frac{1}{k'}\int_0^{\sigma} \sin(k'y) g(y) dy .
  \end{equation}
Thus it seems to be natural to decompose
  \begin{align*}
    b &= \frac{1}{\det M}\left[(v_0w_h'-v_0'w_h) + (v_0'v_g-v_0v_g')\right]_{x=\sigma} \\
         &=: b_1 + b_2 .
  \end{align*}
This leads to the decomposition of $w=b_1w_0 + b_2w_0 + w_h$ into three parts. With the help of (\ref{eq: Elementary estimate for w_0}) and (\ref{eq: Elementary estimate for w_h}) each part can easily be estimated as follows:
  \begin{equation}
    \begin{aligned}\label{eq: Estimate on w}
      \norm{b_1w_0}_2 \lesssim \frac{e^{\abs{\Im k'}\sigma}}{\abs{k' \det M}} \left(\frac{1}{k} + \frac{\abs{k'}}{k^2}\right) \norm{h}_2 ,   \\ 
      \norm{b_2w_0}_2 \lesssim \frac{e^{\abs{\Im k'}\sigma}}{\abs{k' \det M}} \frac{1}{k} \norm{g}_2 , \,                                     
      \norm{w_h}_2    \lesssim \frac{1}{k} \norm{h}_2 .
    \end{aligned}
  \end{equation}
We could now add all three single estimates to get the desired estimate on $w$ but we wait until we have done the same thing for $v$. 

\subsubsection{Estimating v}
Recall from (\ref{eq: Def of v and w}) that $v=av_0+v_h$. Recall the formula (\ref{eq: a}) for $a$. Note that 
  \begin{align*}
    (w_0w_h' - w_0'w_h)(\sigma) = \frac{1}{k}\int_{\sigma}^1 \sin(k(1-y)) h(y) dy \text{ and} \\
    v_g = \frac{(-w_0'v_0 + w_0v_0')(\sigma)}{\det M} v_g =: v_{g,2} + v_{g,3} .
  \end{align*}
Thus it seems to be natural to decompose
  \begin{align*}
    a &= \frac{1}{\det M}\left[(w_0w_h'-w_0'w_h) + w_0'v_g - w_0v_g'\right]_{x=\sigma} \\
         &=: a_1 + a_2 + a_3 .
  \end{align*}
This in turn leads to a decomposition of $v = a_1v_0 + (a_2v_0+v_{g,2}) + (a_3v_0+v_{g,3})$ into three parts. Essentially it leaves to find a good representation of the second and the third part of $v$. First let us write
  \begin{align*}
    a_2v_0 + v_{g,2} &= \frac{w_0'(\sigma)}{k'\det M} \underbrace{\left(v_g(\sigma)\sin(k'x) - k'v_0(\sigma)v_g(x)\right)}_{=:\, I(x)} , \\
    a_3v_0 + v_{g,3} &= \frac{w_0(\sigma)}{k'\det M} \underbrace{\left(-v_g'(\sigma)\sin(k'x) + k'v_0'(\sigma)v_g(x)\right)}_{=:\, II(x)} .
  \end{align*}
Simple calculations yield
  \begin{align*}
    -2I(x) =& \int_0^{\sigma} \cos(k'(\sigma-x-y))g(y) dy - \int_0^x \cos(k'(\sigma-x+y))g(y) dy \\
            &- \int_x^{\sigma} \cos(k'(\sigma+x-y))g(y) dy ,
  \end{align*}
and
  \begin{align*}
    2II(x) =& \int_x^{\sigma} \sin(k'(\sigma+x-y))g(y) dy - \int_0^x \sin(k'(\sigma-x+y))g(y) dy \\
            &- \int_0^{\sigma} \sin(k'(\sigma-x+y))g(y) dy .
  \end{align*}
Using this and again the elementary estimates (\ref{eq: Elementary estimate for w_0}) and (\ref{eq: Elementary estimate for w_h}) for $w_0$ and $w_h$ we deduce
  \begin{equation}
    \begin{aligned}\label{eq: Estimate on v}
      \norm{a_3v_0+v_{g,3}}_2 \lesssim \frac{e^{\abs{\Im k'}\sigma}}{\abs{k' \det M}} \frac{1}{k} \norm{g}_2 ,   \\ 
      \norm{a_2v_0+v_{g,2}}_2 \lesssim \frac{e^{\abs{\Im k'}\sigma}}{\abs{k' \det M}} \norm{g}_2 , \,                                     
      \norm{a_1v_0}_2 \lesssim \frac{e^{\abs{\Im k'}\sigma}}{\abs{k' \det M}} \frac{1}{k} \norm{h}_2 .
    \end{aligned}
  \end{equation}

\subsubsection{Conclusion}
Putting (\ref{eq: Estimate on w}) and (\ref{eq: Estimate on v}) together we get the desired inequality
  \begin{equation}\label{eq: Preliminary Resolvent Estimate}
    \normLzwei{u_n}\lesssim \left[ \frac{e^{\abs{\Im k'}\sigma}}{\abs{k'\det M}} \left(1+\frac{\abs{k'}}{k^2}\right) + \frac{1}{k} \right]\normLzwei{f_n}.
  \end{equation}

\subsection{Regimes where $s$ can live}\label{Regimes where s can live}
Keeping (\ref{eq: Preliminary Resolvent Estimate}) in mind, our task is now to find asymptotic dependencies of $k$ and $k'$ on $s$ and a lower bound for $\abs{k'\det M}$. A priori there is no unique asymptotic behavior of $k=\sqrt{s^2-(n\pi)^2}$ as $s$ tends to infinity because of $k$'s dependence on $n$. To overcome this difficulty we introduce the following four \emph{regimes}:
  \begin{equation}\nonumber
    \text{(i) }   c \leq k \leq c s^{\frac{1}{2}} ,\,
    \text{(ii) }  c s^{\frac{1}{2}} \leq k \leq C s^{\frac{1}{2}} ,\,
    \text{(iii) } C s^{\frac{1}{2}} \leq k \leq c s ,\,
    \text{(iv) }  c s \leq k < s  .
  \end{equation}
Recall from Section \ref{sec: Notation} that $c$ (resp. $C$) means a small (resp. big) number. Both constants may be different in each regime. But by the convention made in section \ref{sec: Notation} we may assume that consecutive regimes overlap.

Since we want to investigate the asymptotics $s\rightarrow\infty$ we always may assume $s>s_0$ for some sufficiently large number $s_0>0$.

\subsubsection{Regime (i): $c \leq k \leq c s^{\frac{1}{2}}$}
For sufficiently small $c$ the first order Taylor expansion of the square root at $1$ gives a good approximation of
  \begin{equation}\nonumber
    k' = \sqrt{2a_0} s^{\frac{1}{2}} e^{-\frac{i\pi}{4}} \left(1 + \frac{ik^2}{a_0 s} + O(k^4 s^{-2})\right) .
  \end{equation}
In particular $\Im k' = -\sqrt{a_0} s^{\frac{1}{2}} (1 + O(k^2 s^{-1}))$ tends with a polynomial rate to minus infinity as $s$ tends to infinity. Therefore $\cot(k'\sigma) = i + O(s^{-\infty})$. Together with (\ref{eq: detM}) this gives us the following useful formula for
  \begin{equation}\label{det M (i)}
    \det M = \frac{\sin(k'\sigma)}{k'}\left[ \cos(k(1-\sigma)) + \frac{k'}{k}(i+O(s^{-\infty}))\sin(k(1-\sigma)) \right] .
  \end{equation}
It is not difficult to see that the term within the brackets is bounded away from zero. Thus $\abs{k' \det M} \gtrsim \exp(\abs{\Im k'}\sigma)$. From (\ref{eq: Preliminary Resolvent Estimate}) now follows (recall also (\ref{eq: k is real}))
  \begin{equation}\nonumber
    \normLzwei{u_n}\lesssim \left(1+\frac{\abs{k'}}{k^2}\right) \normLzwei{f_n} \lesssim s^{\frac{1}{2}} \normLzwei{f_n} \text{ uniformly in } n.
  \end{equation}

\subsubsection{Regime (ii): $c s^{\frac{1}{2}} \leq k \leq C s^{\frac{1}{2}}$}
Because of $k'^2 = k^2 - 2isa_0$ we see that both $\Re k'$ and $-\Im k'$ are of order $s^{\frac{1}{2}}$. Therefore (\ref{det M (i)}) is valid also in this regime. Again the term within the brackets is bounded away from zero. Thus $\abs{k' \det M} \gtrsim \exp(\abs{\Im k'}\sigma)$ and (\ref{eq: Preliminary Resolvent Estimate}) imply
  \begin{equation}\nonumber
    \normLzwei{u_n} \lesssim \normLzwei{f_n} \text{ uniformly in } n.
  \end{equation}

\subsubsection{Regime (iii): $C s^{\frac{1}{2}} \leq k \leq c s$}
Using first order Taylor expansion for the square root at $1$ gives
  \begin{equation}\nonumber
    k' = k \left(1 - ia_0 sk^{-2} + O(s^2 k^{-4})\right) .
  \end{equation}
In particular: If we choose $C$ big enough we can assume the ratio $k'/k$ to be as close to $1$ as we wish. Similarly: If we choose $c$ small enough we may assume $-\Im k'$ to be as large as we want. Therefore we may assume $\cot(k'\sigma)$ to be as close to $i$ as we wish. This means that the following variant of (\ref{det M (i)}) is true for this regime
  \begin{equation}\nonumber
    \det M = \frac{\sin(k'\sigma)}{k'}\left[ \cos(k(1-\sigma)) + (i+\varepsilon)\sin(k(1-\sigma)) \right] ,
  \end{equation}
where $\varepsilon\in\C$ is some error term with a magnitude as small as we wish. If we choose $c$ and $C$ such that $\abs{\varepsilon}\leq 1/2$ we see that the term within the brackets is bounded away from zero. Thus $\abs{k' \det M} \gtrsim \exp(\abs{\Im k'}\sigma)$ and (\ref{eq: Preliminary Resolvent Estimate}) imply
  \begin{equation}\nonumber
    \normLzwei{u_n} \lesssim \normLzwei{f_n} \text{ uniformly in } n.
  \end{equation}

\subsubsection{Regime (iv): $c s \leq k < s$}
As in the previous regime 
  \begin{equation}\nonumber
    k' = k \left(1 - ia_0 sk^{-2} + O(s^{-2})\right) .
  \end{equation}
In particular $k'/k = 1 + O(s^{-1})\rightarrow 1$ and $\Im k' = -a_0 s k^{-1} + O(s^{-1})$ is bounded away from $0, +\infty$ and $-\infty$. Thus 
  \begin{align*}
    \det M &= \frac{1}{k'} \left[ \sin(k'\sigma)\cos(k(1-\sigma) + \cos(k'\sigma)\sin(k(1-\sigma))) \right] + O(s^{-2})   \\
           &= \frac{\sin(k + (k'-k)\sigma)}{k'} + O(s^{-2}) .
  \end{align*}
This implies that $\abs{k'\det M}\approx 1$. Thus from (\ref{eq: Preliminary Resolvent Estimate}) we deduce
  \begin{equation}\nonumber
    \normLzwei{u_n} \lesssim \normLzwei{f_n} \text{ uniformly in } n.
  \end{equation}

\subsection{Conclusion}\label{sec: Conclusion}
Let $u_n$ solve $P_n(s)u_n(x) = f_n(x)$, where $P_n(s)$ is defined in (\ref{eq: separated Stationary Wave Operator}). Section \ref{Regimes where s can live} together with Lemma \ref{thm: forbidden s} shows that the estimate $\normLzwei{u_n} \lesssim s^{1/2}\normLzwei{f_n}$ holds uniformly for any $n$. Therefore, Lemma \ref{thm: P to Pn Lemma} implies Theorem \ref{thm: Maintheorem}.

\section{Exact decay rate for the damped wave equation}\label{sec: Exact Decay Rate for the Damped Wave Equation}
Now we want to prove Theorem \ref{thm: goal}. Therefore recall the definition of the energy $E$ and the damped wave operator $A$ from Section \ref{sec: Introduction}. Then \citep[Theorem 2.4]{BorichevTomilov2010} together with \citep[Proposition 1.3]{BattyDuyckaerts2008} restricted to our situation says in particular that for any $\alpha>0$
  \begin{equation}\label{eq: E vs A}
    \sup_{\norm{U_0}_{D(A)} = 1} E(t,U_0)^{\frac{1}{2}} \approx t^{-\frac{1}{\alpha}} \Leftrightarrow \norm{(is + A)^{-1}} \approx s^{\alpha} .
  \end{equation}
In \citep[Proposition 2.4]{AnantharamanLeautaud2014} it was shown in particular that 
  \begin{equation}\label{eq: A vs P}
    \norm{(is + A)^{-1}} \approx s^{\alpha} \Leftrightarrow \OpNormLzwei{P(s)^{-1}} \approx s^{\alpha - 1}.
  \end{equation}
Actually this equivalence is stated there with `$\approx$` replaced by `$\lesssim$`. But the `$\gtrsim$`-version is included in \citep[Lemma 4.6]{AnantharamanLeautaud2014}. In the appendix of \citep{AnantharamanLeautaud2014} St\'{e}phane Nonnenmacher proved
  \begin{Proposition}[Nonnenmacher, 2014]\label{thm: Nonnenmacher}
    The spectrum of $A$ contains an infinite sequence $(z_j)$ with $\Im z_j \rightarrow \infty$ such that $0 < \Re z_j \lesssim (\Im z_j)^{-3/2}$. 
  \end{Proposition}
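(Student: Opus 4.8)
The plan is to realize the eigenvalues of $A$ as complex zeros of the analytically continued determinant $\det M$ from (\ref{eq: detM}), and then to locate a family of such zeros by a perturbative analysis of exactly the regime in which the resolvent bound of Theorem \ref{thm: Maintheorem} is saturated. First I would observe that if $z$ is an eigenvalue of $A$ with eigenfunction $(u,v)$, then $v=-zu$ and $\phi:=u$ solves $(-\Delta + z^2 - 2za(x))\phi = 0$ with Dirichlet conditions. Writing $z=-is$ this is exactly the homogeneous equation $P(s)\phi=0$, so after separation of variables (Section \ref{sec: Separation of Variables}), $z=-is$ is an eigenvalue precisely when $D(s,n):=\det M(s,n)=0$ for some $n$, where $\det M$ is continued to complex $s$. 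Since $\tfrac{1}{k}\sin(k\,\cdot)$, $\cos(k\,\cdot)$, $\tfrac{1}{k'}\sin(k'\,\cdot)$, $\cos(k'\,\cdot)$ are entire in $k^2=s^2-(n\pi)^2$ and $k'^2=k^2-2isa_0$ respectively, $D(\cdot,n)$ is entire, and Theorem \ref{thm: Maintheorem} guarantees that none of its zeros are real.

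Next I would restrict attention to zeros for which $k=k_n$ stays bounded, i.e.\ Regime (i) with $k=O(1)$, where the resolvent is largest. There $\Im k'\to-\infty$, so $\cot(k'\sigma)=i+O(s^{-\infty})$ exactly as in the derivation of (\ref{det M (i)}); dividing $D$ by the nonvanishing factor $\sin(k'\sigma)/k'$, the eigenvalue condition becomes
\begin{equation}\nonumber
F(s,n):=\cos(k(1-\sigma)) + \Big(i+O(s^{-\infty})\Big)\frac{k'}{k}\sin(k(1-\sigma)) = 0 .
\end{equation}
Using the Regime (i) expansion $k'=\sqrt{2a_0}\,s^{1/2}e^{-i\pi/4}(1+O(k^2 s^{-1}))$, the leading model is $F_0=\cos(k(1-\sigma))+i\tfrac{k'}{k}\sin(k(1-\sigma))$, whose coefficient $k'/k$ has size $s^{1/2}\to\infty$.

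I would then locate a zero near $k_0:=m\pi/(1-\sigma)$, a zero of $\sin(k(1-\sigma))$. Setting $k=k_0+\xi$ and linearizing gives $F_0\approx(-1)^m\big[1+i\tfrac{k'}{k_0}(1-\sigma)\xi\big]$, whose simple root $\xi_\ast=\tfrac{ik_0}{(1-\sigma)k'}$ has size $O(s^{-1/2})$ and satisfies $\Im\xi_\ast>0$ (since $k'$ lies in the fourth quadrant). Translating back through $s^2=(n\pi)^2+k^2$ and $\Im(s^2)=2\Re k\,\Im k$, a root with $\Re k\approx k_0>0$ and $\Im k=\Im\xi_\ast\sim R^{-1/2}>0$ (writing $R=\Re s$) forces $\Im s\sim R^{-3/2}>0$. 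Hence there are zeros $s_j=R_j+i\eta_j$ with $R_j\to\infty$ (take $m$ fixed, $n\to\infty$) and $0<\eta_j\lesssim R_j^{-3/2}$. Since $A$ has real coefficients, its spectrum is conjugation-symmetric, so $\bar z_j=\overline{-is_j}=\eta_j+iR_j$ is an eigenvalue with $\Im\bar z_j=R_j\to\infty$ and $0<\Re\bar z_j=\eta_j\lesssim(\Im\bar z_j)^{-3/2}$, which is the claim (the positivity $\Re\bar z_j>0$ is in any case forced by dissipativity of $-A$).

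The main obstacle is upgrading the approximate root $\xi_\ast$ of the model $F_0$ to an exact zero of $F$ while controlling its location sharply. I would do this by a Rouch\'{e} / argument-principle argument on a circle $\abs{\xi-\xi_\ast}=\rho$ with $\rho$ a small constant times $R^{-1/2}$: one must verify that the perturbation $F-F_0$ — arising from the $O(s^{-\infty})$ tail of $\cot(k'\sigma)$, from the higher Taylor terms of $k'$, and from the quadratic remainder in $\xi$ — is strictly dominated on this circle by the linear part, whose modulus there is of order $s^{1/2}\rho$. This produces a unique genuine zero inside the circle and, since the dominant balance pins $\xi$ to leading order, delivers both the asymptotics $\eta_j\sim R_j^{-3/2}$ and the decisive positivity $\eta_j>0$.
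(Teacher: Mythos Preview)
The paper does not contain its own proof of this proposition; it is attributed to Nonnenmacher's appendix to the cited Anantharaman--L\'eautaud paper, with only the remark that the original argument (written for periodic boundary conditions) carries over to the Dirichlet case. So there is no proof in the present paper to compare your attempt against.

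That said, your sketch is a sound route to the result, and it is economical in that it recycles the machinery already built in Section~\ref{sec: Proof of Theorem}: the explicit determinant (\ref{eq: detM}) and the Regime~(i) asymptotics for $k'$. The reduction of the eigenvalue problem for $A$ to the transcendental equation $F(s,n)=0$, the localization of roots near $k_0=m\pi/(1-\sigma)$ (where the large coefficient $k'/k\sim s^{1/2}$ forces $\sin(k(1-\sigma))$ to be small), and the Rouch\'e argument on a disc of radius comparable to $s^{-1/2}$ around the approximate root $\xi_\ast$ are all correct; the bookkeeping that the perturbations --- the $O(s^{-\infty})$ tail of $\cot(k'\sigma)$, the $O(k^2 s^{-1})$ correction to $k'$, and the quadratic remainder in $\xi$ --- are dominated by the linear part on that circle goes through without difficulty. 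One small point worth making explicit when you write it up: in linearizing in $\xi=k-k_0$ you are implicitly also moving $s$ via $s^2=(n\pi)^2+k^2$, and $k'$ depends on $s$ as well as on $k$; but since $\partial s/\partial k=k/s=O(s^{-1})$ in this regime, the induced variation of $k'$ is of lower order and does not disturb the leading balance.
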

Actually he proved this theorem under periodic boundary conditions, but the proof applies also to Dirichlet or Neumann boundary conditions. Note that Proposition \ref{thm: Nonnenmacher} together with (\ref{eq: A vs P}) establishes the `$\gtrsim$`-inequality of Theorem \ref{thm: Maintheorem}.

Using (\ref{eq: E vs A}) and (\ref{eq: A vs P}) together with Theorem \ref{thm: Maintheorem} yields Theorem \ref{thm: goal}.


\bibliographystyle{plainnat}
\bibliography{Refs}

\hfill

Fachrichtung Mathematik, Institut f\"{u}r Analysis, Technische Universit\"{a}t Dresden, 01062, Dresden, Germany. Email: \textit{Reinhard.Stahn@tu-dresden.de}

\end{document}